\theoremstyle{plain}
\newtheorem{proposition}{Proposition}
\newtheorem{lemma}{Lemma}
\newtheorem{corollary}{Corollary}
\newtheorem{remark}{Remark}
\begin{document}

\title{Estimation of Quantum Fisher Information via Stein’s Identity in Variational Quantum Algorithms}

\author{Mourad Halla}
\affiliation{Deutsches Elektronen-Synchrotron DESY, Platanenallee 6, 15738 Zeuthen, Germany}

\maketitle

\begin{abstract}
The Quantum Fisher Information Matrix (QFIM) plays a crucial role in quantum optimization algorithms such as Variational Quantum Imaginary Time Evolution and Quantum Natural Gradient Descent. However, computing the full QFIM incurs a quadratic computational cost of \( O(d^2) \) with respect to the number of parameters \( d \), limiting its scalability for high-dimensional quantum systems. To address this limitation, stochastic methods such as the Simultaneous Perturbation Stochastic Approximation (SPSA) have been employed to reduce computational complexity to a constant [Quantum \textbf{5}, 567 (2021)]. In this work, we propose an alternative estimation framework based on Stein’s identity that also achieves constant computational complexity. Furthermore, our method reduces the quantum resources required for QFIM estimation compared to the SPSA approach. We provide numerical examples using the transverse-field Ising model and the lattice Schwinger model to demonstrate the feasibility of applying our method to realistic quantum systems.
\end{abstract}
\section{Introduction}

Quantum computing has emerged as a transformative framework for tackling computational challenges that exceed the reach of classical algorithms. Among the leading strategies in this field, \textit{Variational Quantum Algorithms (VQAs)} \cite{Cerezo2021, McClean2016, Bharti2022}, which use a hybrid quantum-classical optimization scheme, have garnered significant attention due to their applicability in areas such as quantum chemistry, materials science, and high-energy physics. A particularly impactful algorithm within this class is the \textit{Variational Quantum Eigensolver (VQE)} \cite{Peruzzo2014}, which is designed to approximate ground-state energies of quantum systems, making it especially well-suited for near-term noisy intermediate-scale quantum (NISQ) devices.

Optimization strategies play a pivotal role in the efficiency of VQAs, influencing the quality of the approximated solutions and the quantum resources required to reach them. Conventional optimization algorithms, such as gradient descent (GD), often struggle to navigate the intricate nature of quantum optimization landscapes, as these landscapes in VQAs are typically non-convex and prone to noise and barren plateaus. This challenge necessitates the development of more sophisticated optimization techniques to enhance convergence speed and improve the accuracy of approximated solutions. Some efficient quantum optimization algorithms include Quantum Natural Gradient (QNG) \cite{Stokes2020} and Variational Quantum Imaginary Time Evolution (VarQITE) \cite{McArdle} that use an update rule based on the Quantum Fisher Information Matrix (QFIM).

The QFIM serves as a fundamental metric that characterizes the local curvature of a quantum state manifold, providing a natural geometric framework for parameter optimization. Unlike classical approaches that rely on Euclidean distances in parameter space, the QFIM captures how quantum states evolve under small parameter variations, ensuring an optimization trajectory aligned with the true structure of quantum state space. Moreover, QFIM-based techniques have broader implications beyond optimization in VQAs, contributing to quantum sensing and quantum metrology (see the review in~\cite{Meyer2021, Liu2020, S.Sidhu2020}).

Despite this, standard methods like the parameter-shift rule \cite{Mari} for computing the QFIM scale quadratically as \( O(d^2) \) with the number of parameters \( d \), making them impractical for high-dimensional quantum systems. This limitation necessitates alternative approaches for efficiently computing the QFIM. In \cite{Stokes2020}, diagonal and block-diagonal approximation were proposed to reduce computational complexity, but these methods result in the loss of parameter correlations in the QFIM. More efficient approaches that preserve these correlations involve the use of the simultaneous perturbation stochastic approximation (SPSA) \cite{Spall1992} method to approximate the QFIM, reducing the complexity cost to a constant \cite{Gacon}. In this work, we propose an alternative and efficient method based on Stein’s identity, which also reduces the quantum computational complexity from \( O(d^2) \) to a constant without losing parameter correlations.

Stein’s identity provides a powerful framework for estimating Hessian information in stochastic optimization, particularly in zeroth-order (ZO) settings where direct gradient and Hessian computations are infeasible. Recent studies have leveraged it to develop more efficient Hessian approximation methods, reducing per-iteration complexity compared to traditional second-order techniques such as the simultaneous perturbation stochastic approximation (2SPSA) \cite{Zhu2022,Erdogdu2015}. Employing a perturbation-based approach enables unbiased gradient and Hessian estimation while requiring fewer ZO queries than 2SPSA, thereby improving convergence efficiency. In this work, we use a similar idea to estimate the QFIM.

The remainder of the manuscript is organized as follows: Section~\ref{sec:Theory} provides a mathematical overview of Stein’s identity, demonstrating its application in estimating gradients, the Hessian, and the QFIM. Section~\ref{sec:QNG} presents a brief overview of variational quantum algorithms and the Quantum Natural Gradient. Section~\ref{sec:numerical} includes numerical examples using VQE to compute the ground-state energy of the transverse-field Ising model and the lattice Schwinger model. Finally, Section~\ref{sec:conclusion} provides a summary and outlook.
\section{Theoretical Results}
\label{sec:Theory}
In this section, we provide a theoretical overview of gradient and Hessian estimation using Stein’s identity, and subsequently integrate the Stein’s identity method into the QFIM framework. As Stein’s identity has been proposed as an alternative to Simultaneous Perturbation Stochastic Approximation (SPSA) for reducing computational complexity, we first review the SPSA method before introducing the Stein-based approach.

\subsection{Gradient and Hessian Estimation via SPSA Methods}

Simultaneous Perturbation Stochastic Approximation (SPSA) \cite{Spall1992} provides an efficient method for estimating gradients using stochastic perturbations, significantly reducing computational costs compared to finite-difference approaches. Given a differentiable objective function \(f\colon \mathbb{R}^d \to \mathbb{R}\), SPSA estimates the gradient using a two-point perturbation method.

Let \(\bm{\Delta}\sim\{\pm1\}^d\) be a random perturbation vector drawn from the Rademacher distribution (each coordinate independently takes \(\pm1\) with probability~\(1/2\)). Then the gradient is estimated as
\begin{equation}
  \nabla f(\bm{\theta})
  =
  \mathbb{E}_{\bm{\Delta}}\!\Bigl[
    \frac{f(\bm{\theta} + c\,\bm{\Delta}) - f(\bm{\theta} - c\,\bm{\Delta})}{2c}\,\bm{\Delta}
  \Bigr],
\end{equation}
where \(c>0\) is a small finite displacement. This method requires only two function evaluations per iteration, and \(\bm{\Delta}\) is independent of \(\bm{\theta}\), making it scalable to high-dimensional problems.

The second-order SPSA (2SPSA) extends SPSA to approximate the Hessian matrix \(\bm{H}(\bm{\theta})\) without explicitly computing all \(d^2\) entries. Two independent perturbation vectors \(\bm{\Delta}_1\) and \(\bm{\Delta}_2 \sim\{\pm1\}^d\) are sampled, leading to the Hessian estimate
\begin{equation}
  \hat{\bm{H}}
  =
  \mathbb{E}_{\bm{\Delta}_1,\bm{\Delta}_2}\!\Bigl[
    \frac{\delta f}{2c^2}\,
    \frac{\bm{\Delta}_1\,\bm{\Delta}_2^\mathsf{T} + \bm{\Delta}_2\,\bm{\Delta}_1^\mathsf{T}}{2}
  \Bigr],
  \label{Hessian_2SPSA}
\end{equation}
where
\begin{equation}
  \delta f
  =
  f(\bm{\theta} + c\,\bm{\Delta}_1 + c\,\bm{\Delta}_2)
  - f(\bm{\theta} + c\,\bm{\Delta}_1)
  - f(\bm{\theta} - c\,\bm{\Delta}_1 + c\,\bm{\Delta}_2)
  + f(\bm{\theta} - c\,\bm{\Delta}_1).
\end{equation}
This approach reduces the \(O(d^2)\) computational cost of explicit Hessian estimation while maintaining reasonable accuracy. The 2SPSA method requires four function evaluations and two perturbation vectors \(\bm{\Delta}_1\) and \(\bm{\Delta}_2\). In the following sections, we will see that Stein’s identity requires two or three function evaluations and a single perturbation vector.
\subsection{Gradient and Hessian Estimation via Stein’s Identity}
\label{sec:Stein}

Stein’s identity provides a fundamental mathematical tool for estimating gradients and Hessians, particularly in optimization problems where derivative information is inaccessible or computationally expensive. In this section, we provide an overview of this method and its application in estimating gradients and Hessians. For a more detailed mathematical treatment and rigorous proofs, we refer the reader to \cite{Stein2004, Zhu2022, Erdogdu2015}.

A central result of Stein’s method is summarized in the following proposition:
\begin{proposition}[First-Order and Second-Order Stein’s Identity \cite{Stein2004}] 
Let \( \bm{X} \in \mathbb{R}^d \) represent a \(d\)-dimensional random vector with an underlying probability density function \( p(\bm{x}): \mathbb{R}^d \to \mathbb{R} \).
\begin{itemize}
    \item[i)] If \( p(\bm{x}) \) is differentiable, and \( q: \mathbb{R}^d \to \mathbb{R} \) is a differentiable function such that \( \mathbb{E} \left\{ \bm{\nabla} q(\bm{X}) \right\} \) exists, then the following identity holds:
    \begin{equation}
        \mathbb{E} \left\{ q(\bm{X})[p(\bm{X})]^{-1} \bm{\nabla} p(\bm{X}) \right\} = -\mathbb{E} \left\{ \bm{\nabla} q(\bm{X}) \right\}.
        \label{eq:first_order_stein}
    \end{equation}
    \item[ii)] If \( p(\bm{x}) \) and \( q(\bm{x}) \) are twice differentiable functions such that \( \mathbb{E} \left\{ \bm{\nabla}^2 q(\bm{X}) \right\} \) exists, then:
    \begin{equation}
        \mathbb{E} \left\{ q(\bm{X})[p(\bm{X})]^{-1} \bm{\nabla}^2 p(\bm{X}) \right\} = \mathbb{E} \left\{ \bm{\nabla}^2 q(\bm{X}) \right\}.
        \label{eq:second_order_stein}
    \end{equation}
\end{itemize}
\label{proposition1}
\end{proposition}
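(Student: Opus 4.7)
The plan is to reduce both identities to integration by parts on $\mathbb{R}^d$, which is the classical route to Stein-type identities. For part (i), I would first expand the expectation as a Lebesgue integral and note that the factors $p(\bm{x})$ and $[p(\bm{x})]^{-1}$ cancel wherever $p(\bm{x})>0$, leaving
\begin{equation*}
\mathbb{E}\bigl\{q(\bm{X})[p(\bm{X})]^{-1}\bm{\nabla}p(\bm{X})\bigr\} = \int_{\mathbb{R}^d} q(\bm{x})\,\bm{\nabla} p(\bm{x})\, d\bm{x}.
\end{equation*}
Componentwise I would then apply the one-dimensional integration-by-parts formula $\int q\,\partial_i p\, d\bm{x} = -\int (\partial_i q)\, p\, d\bm{x}$, reabsorb $p(\bm{x})$ into the expectation, and recombine the components to recover $-\mathbb{E}\{\bm{\nabla} q(\bm{X})\}$, which is exactly \eqref{eq:first_order_stein}.

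For part (ii) the strategy is the same but with two integrations by parts. After cancelling $p$ and $p^{-1}$, the $(i,j)$ entry of the matrix-valued integral reads $\int q(\bm{x})\,\partial_i\partial_j p(\bm{x})\, d\bm{x}$. Integrating by parts in $x_i$ produces $-\int(\partial_i q)(\partial_j p)\, d\bm{x}$, and a second integration by parts in $x_j$ gives $\int(\partial_i\partial_j q)\,p\, d\bm{x} = \mathbb{E}\{\partial_i\partial_j q(\bm{X})\}$. Reassembling the entries yields \eqref{eq:second_order_stein}. The sign pattern is consistent with what one already sees in (i): each application of integration by parts contributes a minus sign, so an even number of derivative transfers leaves the sign unchanged.

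The only real obstacle will be justifying the vanishing of the boundary terms in each integration by parts. These conditions are not listed among the explicit hypotheses, but they are standard for Stein-type results and are already implicit in the stated existence of $\mathbb{E}\{\bm{\nabla} q(\bm{X})\}$ and $\mathbb{E}\{\bm{\nabla}^2 q(\bm{X})\}$ together with sufficient decay of $p$. In writing the proof I would make these regularity assumptions explicit—essentially that the fluxes $q(\bm{x})\,p(\bm{x})\,\hat{\bm{n}}$ and $q(\bm{x})\,\partial_j p(\bm{x})\,\hat{\bm{n}}$ vanish on spheres of growing radius—and invoke Fubini's theorem where needed to reorder the coordinate integrations. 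With those preliminaries in place, the rest of the argument is routine calculus, and no probabilistic or quantum content enters at this stage; the proposition is a deterministic statement about smooth densities that will be \emph{used} in a quantum setting only in the subsequent sections.
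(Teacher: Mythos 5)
Your proposal is correct, but note that the paper does not actually prove this proposition at all: it is imported verbatim from the cited literature (``For a more detailed mathematical treatment and rigorous proofs, we refer the reader to \cite{Stein2004, Zhu2022, Erdogdu2015}''), so there is no in-paper argument to compare against. Your integration-by-parts derivation is the standard route taken in those references and it goes through as you describe: the factor $[p(\bm{X})]^{-1}$ cancels against the density in the expectation (harmlessly, since $\{p=0\}$ is a $P$-null set), one transfer of $\partial_i$ yields the minus sign in \eqref{eq:first_order_stein}, and two transfers yield the plus sign in \eqref{eq:second_order_stein}, where $\bm{\nabla}^2$ is to be read as the Hessian matrix (consistent with the paper's later use $\bm{\nabla}^2 p(\bm{x}) = (\bm{x}\bm{x}^T-\bm{I})p(\bm{x})$ for the standard normal), so the argument is entrywise exactly as you wrote it. You are also right that the vanishing of the boundary fluxes is not among the stated hypotheses and must be added as a regularity assumption; the proposition as printed is slightly informal on this point, and making the decay conditions on $qp$ and $q\,\partial_j p$ explicit, together with the Fubini step, is the only nontrivial bookkeeping. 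For the intended application the density is Gaussian and $q$ is a bounded fidelity, so these conditions hold automatically.
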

When considering a multivariate standard normal random vector \( \bm{X} \sim \mathcal{N}(\bm{0}, \bm{I}) \), where \( \bm{I} \) represents the identity matrix, we have \( \bm{\nabla} p(\bm{x}) = -\bm{x}p(\bm{x}) \) and \( \bm{\nabla}^2 p(\bm{x}) = (\bm{x}\bm{x}^T - \bm{I})p(\bm{x}) \). Using these expressions, equations \eqref{eq:first_order_stein} and \eqref{eq:second_order_stein} in Proposition \ref{proposition1} take the form:
\begin{equation}
    \mathbb{E} \left\{ \bm{X} q(\bm{X}) \right\} = \mathbb{E} \left\{ \bm{\nabla} q(\bm{X}) \right\},
    \label{eq:normal_first_order}
\end{equation}
\begin{equation}
    \mathbb{E} \left\{ (\bm{X} \bm{X}^T - \bm{I}) q(\bm{X}) \right\} = \mathbb{E} \left\{ \bm{\nabla}^2 q(\bm{X}) \right\}.
    \label{eq:normal_second_order}
\end{equation}
For the case of \( \bm{X} \sim \mathcal{N}(\bm{0}, \bm{\Sigma}) \) where \( \bm{\Sigma} \) is an arbitrary positive definite covariance matrix, the gradient and Hessian of the probability density function are given by \( \bm{\nabla} p(\bm{x}) = -\bm{\Sigma}^{-1} \bm{x} p(\bm{x}) \) and \( \bm{\nabla}^2 p(\bm{x}) = (\bm{\Sigma}^{-1} \bm{x} \bm{x}^T \bm{\Sigma}^{-1} - \bm{\Sigma}^{-1}) p(\bm{x}) \). Under these conditions, equations \eqref{eq:first_order_stein} and \eqref{eq:second_order_stein} reduce to:
\begin{equation}
    \mathbb{E} \left\{ \bm{\Sigma}^{-1} \bm{X} q(\bm{X}) \right\} = \mathbb{E} \left\{ \bm{\nabla} q(\bm{X}) \right\},
    \label{eq:first_order_normal}
\end{equation}
\begin{equation}
    \mathbb{E} \left\{ (\bm{\Sigma}^{-1} \bm{X} \bm{X}^T \bm{\Sigma}^{-1} - \bm{\Sigma}^{-1}) q(\bm{X}) \right\} = \mathbb{E} \left\{ \bm{\nabla}^2 q(\bm{X}) \right\}.
    \label{eq:second_order_normal}
\end{equation}

To approximate the gradient and Hessian of a function \(f(\bm{\theta})\) in optimization problems, we define, for any \(c>0\) and \(\bm{u} \sim \mathcal{N}(\bm{0}, \bm{I})\),
\begin{equation}
  f_{c}(\bm{\theta}) \;=\; \mathbb{E}_{\bm{u}}\bigl[f(\bm{\theta} + c\,\bm{u})\bigr].
  \label{fc}
\end{equation}
The function \(f_{c}(\bm{\theta})\) is thus \(C^{\infty}\). Stein’s identity offers various approximation orders, requiring one, two, or three function evaluations to estimate either the unbiased gradient or the unbiased Hessian. The results are summarized in the following lemma:

\begin{lemma}[Stein’s Identity-Based Estimator \cite{Zhu2022}]
Let \(\bm{u} \sim \mathcal{N}(\bm{0}, \bm{I})\), and consider \(f_{c}(\bm{\theta})\) defined by \eqref{fc}. Then the gradient \(\bm{\nabla} f_{c}(\bm{\theta})\) and Hessian \(\bm{\nabla}^{2} f_{c}(\bm{\theta})\) can be estimated as follows:
\begin{itemize}
    \item[i)] \textbf{Single-Evaluation Estimator:}
    \begin{equation}
        \bm{\nabla} f_{c}(\bm{\theta})
        =
        \mathbb{E}_{\bm{u}}\bigl\{\,c^{-1}\,f(\bm{\theta} + c\,\bm{u})\,\bm{u}\bigr\},
        \label{eq:single_evaluation_gradient}
    \end{equation}
    \begin{equation}
        \bm{\nabla}^{2} f_{c}(\bm{\theta})
        =
        \mathbb{E}_{\bm{u}}\bigl\{\,c^{-2}\,f(\bm{\theta} + c\,\bm{u})\,(\bm{u}\,\bm{u}^{T} - \bm{I})\bigr\}.
        \label{eq:single_evaluation_hessian}
    \end{equation}

    \item[ii)] \textbf{Two-Evaluation Estimator:}
    \begin{equation}
        \bm{\nabla} f_{c}(\bm{\theta})
        =
        \mathbb{E}_{\bm{u}}\bigl\{\,(2c)^{-1}\,\bigl(f(\bm{\theta} + c\,\bm{u}) - f(\bm{\theta} - c\,\bm{u})\bigr)\,\bm{u}\bigr\},
        \label{eq:two_evaluation_gradient}
    \end{equation}
    \begin{equation}
        \bm{\nabla}^{2} f_{c}(\bm{\theta})
        =
        \mathbb{E}_{\bm{u}}\bigl\{\,c^{-2}\,\bigl(f(\bm{\theta} + c\,\bm{u}) - f(\bm{\theta})\bigr)\,(\bm{u}\,\bm{u}^{T} - \bm{I})\bigr\}.
        \label{eq:two_evaluation_hessian}
    \end{equation}

    \item[iii)] \textbf{Three-Evaluation Estimator:}
    \begin{equation}
        \bm{\nabla}^{2} f_{c}(\bm{\theta})
        =
        \mathbb{E}_{\bm{u}}\bigl\{\, (2\,c^{2})^{-1}\,\bigl(f(\bm{\theta} + c\,\bm{u}) + f(\bm{\theta} - c\,\bm{u}) - 2\,f(\bm{\theta})\bigr)\,(\bm{u}\,\bm{u}^{T} - \bm{I})\bigr\}.
        \label{eq:three_evaluation_hessian}
    \end{equation}
\end{itemize}
\label{Lemma1}
\end{lemma}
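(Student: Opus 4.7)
The plan is to derive each estimator by combining Proposition~\ref{proposition1} (for standard normal $\bm{u}\sim\mathcal{N}(\bm{0},\bm{I})$, i.e.\ equations~\eqref{eq:normal_first_order}--\eqref{eq:normal_second_order}) with the scaling substitution $q(\bm{u}):=f(\bm{\theta}+c\,\bm{u})$ and symmetry of the Gaussian. Throughout, I would assume enough regularity on $f$ (e.g.\ locally integrable and twice differentiable with controllable growth) to justify interchange of expectation and differentiation under the integral.

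For the \textbf{single-evaluation estimators}, I would first note that Leibniz gives $\bm{\nabla}f_c(\bm{\theta})=\mathbb{E}_{\bm{u}}[\bm{\nabla}_{\bm{\theta}}f(\bm{\theta}+c\bm{u})]$ and $\bm{\nabla}^2 f_c(\bm{\theta})=\mathbb{E}_{\bm{u}}[\bm{\nabla}_{\bm{\theta}}^2 f(\bm{\theta}+c\bm{u})]$. Setting $q(\bm{u})=f(\bm{\theta}+c\bm{u})$, the chain rule yields $\bm{\nabla}_{\bm{u}} q(\bm{u})=c\,\bm{\nabla}_{\bm{\theta}}f(\bm{\theta}+c\bm{u})$ and $\bm{\nabla}_{\bm{u}}^2 q(\bm{u})=c^2\,\bm{\nabla}_{\bm{\theta}}^2 f(\bm{\theta}+c\bm{u})$. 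Plugging these into \eqref{eq:normal_first_order} and \eqref{eq:normal_second_order} and dividing by $c$ and $c^2$, respectively, produces \eqref{eq:single_evaluation_gradient} and \eqref{eq:single_evaluation_hessian}.

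For the \textbf{two-evaluation gradient}, I would use the reflection symmetry of the standard normal: since $-\bm{u}\stackrel{d}{=}\bm{u}$, the change of variable $\bm{u}\mapsto-\bm{u}$ gives $\mathbb{E}_{\bm{u}}[f(\bm{\theta}-c\bm{u})\bm{u}]=-\mathbb{E}_{\bm{u}}[f(\bm{\theta}+c\bm{u})\bm{u}]$, so the two-evaluation antithetic expression in \eqref{eq:two_evaluation_gradient} equals $c^{-1}\mathbb{E}_{\bm{u}}[f(\bm{\theta}+c\bm{u})\bm{u}]$, which is \eqref{eq:single_evaluation_gradient}. For the \textbf{two-evaluation Hessian}, I would use the zero-mean property $\mathbb{E}_{\bm{u}}[\bm{u}\bm{u}^{T}-\bm{I}]=\bm{0}$, so subtracting the $\bm{u}$-independent baseline $f(\bm{\theta})$ inside \eqref{eq:two_evaluation_hessian} does not alter the expectation and it reduces to \eqref{eq:single_evaluation_hessian}. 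The same zero-mean property together with the symmetry $\bm{u}\mapsto-\bm{u}$ (under which $\bm{u}\bm{u}^T$ is invariant) handles the \textbf{three-evaluation Hessian}: the two terms $\mathbb{E}_{\bm{u}}[f(\bm{\theta}\pm c\bm{u})(\bm{u}\bm{u}^{T}-\bm{I})]$ are equal, the $-2f(\bm{\theta})$ contribution vanishes, and the overall $(2c^2)^{-1}$ prefactor collapses to $c^{-2}$, again matching \eqref{eq:single_evaluation_hessian}.

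The calculations are essentially mechanical and there is no significant obstacle: the only point requiring any care is justifying the differentiation-under-the-integral step that defines $\bm{\nabla} f_c$ and $\bm{\nabla}^2 f_c$, which amounts to invoking dominated convergence under mild integrability assumptions on $f$ and its first two derivatives against the Gaussian density. Once that is granted, the rest is an application of Proposition~\ref{proposition1} and the symmetry/zero-mean properties of $\mathcal{N}(\bm{0},\bm{I})$.
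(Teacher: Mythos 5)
Your proposal is correct, and it follows exactly the route the paper intends: the paper itself states Lemma~\ref{Lemma1} without proof (deferring to the cited reference), but it sets up Proposition~\ref{proposition1} and its Gaussian specializations \eqref{eq:normal_first_order}--\eqref{eq:normal_second_order} immediately beforehand precisely so that the lemma follows by taking \(q(\bm{u})=f(\bm{\theta}+c\,\bm{u})\), applying the chain rule, and using the reflection symmetry of \(\mathcal{N}(\bm{0},\bm{I})\) together with \(\mathbb{E}[\bm{u}\bm{u}^{T}-\bm{I}]=\bm{0}\), exactly as you do. Your added remark about justifying differentiation under the integral is the right regularity caveat and does not change the argument.
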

To provide flexibility for QFIM estimation in the next section, we extend Lemma~\ref{Lemma1} to the case where \(\bm{X} \sim \mathcal{N}(\bm{0},\,\bm{\Sigma})\), with an arbitrary positive-definite covariance matrix \(\bm{\Sigma}\). In this work, we focus on the special case where \(\bm{\Sigma} = b^{2}\,\bm{I}\), and summarize our results for Hessian estimation in the following corollary:

\begin{corollary}
Let \(\bm{X} \sim \mathcal{N}(\bm{0},\,b^{2}\,\bm{I})\), and define \(\bm{X} = c\,\bm{Y}\), with \(\bm{Y} \sim \mathcal{N}(\bm{0},\,\tfrac{b^{2}}{c^{2}}\,\bm{I})\). Assume \(b,c>0\). Then the two-evaluation estimator of the Hessian of the smoothed function \(f_{c}(\bm{\theta})\) is given by:
\begin{equation}
    \bm{\nabla}^{2} f_{c}(\bm{\theta}) 
    = 
    \mathbb{E}_{\bm{Y}} \Bigl\{\tfrac{c^{2}}{b^{4}}\,\bigl[f(\bm{\theta} + c\,\bm{Y}) - f(\bm{\theta})\bigr]\,\bigl(\bm{Y}\,\bm{Y}^{T} - \tfrac{b^{2}}{c^{2}}\,\bm{I}\bigr)\Bigr\}.
    \label{eq:hessian_general_cov_2}
\end{equation}
and the three-evaluation estimator is given by:
\begin{equation}
    \bm{\nabla}^{2} f_{c}(\bm{\theta}) 
    = 
    \mathbb{E}_{\bm{Y}} \Bigl\{\tfrac{c^{2}}{2\,b^{4}}\,\bigl[f(\bm{\theta} + c\,\bm{Y}) + f(\bm{\theta} - c\,\bm{Y}) - 2\,f(\bm{\theta})\bigr]\,\bigl(\bm{Y}\,\bm{Y}^{T} - \tfrac{b^{2}}{c^{2}}\,\bm{I}\bigr)\Bigr\}.
    \label{eq:hessian_general_cov_3}
\end{equation}
\end{corollary}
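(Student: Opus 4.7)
The plan is to derive both estimators by specializing the general-covariance second-order Stein's identity \eqref{eq:second_order_normal} (equation \eqref{eq:second_order_normal} rewritten for a general $\bm{\Sigma}$, which is exactly what the text labels as Proposition~\ref{proposition1}(ii) / the ``$\bm{\Sigma}$-version'') to $\bm{\Sigma} = b^{2}\bm{I}$, then applying it to a shifted version of $f$, inserting a baseline using a zero-mean observation, and finally symmetrizing.

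First, I would set $q(\bm{x}) = f(\bm{\theta} + \bm{x})$ and apply the $\bm{\Sigma}$-form of the second-order Stein identity with $\bm{\Sigma} = b^{2}\bm{I}$, so that $\bm{\Sigma}^{-1} = b^{-2}\bm{I}$ and $\bm{\Sigma}^{-1}\bm{X}\bm{X}^{T}\bm{\Sigma}^{-1} - \bm{\Sigma}^{-1} = b^{-4}\bm{X}\bm{X}^{T} - b^{-2}\bm{I}$. Since $\bm{\nabla}^{2} q(\bm{X}) = \bm{\nabla}^{2} f(\bm{\theta} + \bm{X})$, interchanging the Hessian with the expectation (valid under the standard integrability hypotheses underlying Proposition~\ref{proposition1}, e.g.\ $f$ of at most polynomial growth so that Gaussian moments dominate) gives
\[
\bm{\nabla}^{2} f_{c}(\bm{\theta}) \;=\; \mathbb{E}_{\bm{X}}\bigl[(b^{-4}\bm{X}\bm{X}^{T} - b^{-2}\bm{I})\,f(\bm{\theta} + \bm{X})\bigr].
\]

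Second, I would use the zero-mean observation $\mathbb{E}[b^{-4}\bm{X}\bm{X}^{T} - b^{-2}\bm{I}] = b^{-4}(b^{2}\bm{I}) - b^{-2}\bm{I} = \bm{0}$ to subtract $f(\bm{\theta})$ (which is constant in $\bm{X}$) from the integrand without changing the expectation. Substituting $\bm{X} = c\bm{Y}$, so that $\bm{X}\bm{X}^{T} = c^{2}\bm{Y}\bm{Y}^{T}$ and $b^{-4}\bm{X}\bm{X}^{T} - b^{-2}\bm{I} = (c^{2}/b^{4})\bigl(\bm{Y}\bm{Y}^{T} - (b^{2}/c^{2})\bm{I}\bigr)$, produces \eqref{eq:hessian_general_cov_2} directly.

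Third, for the three-evaluation form \eqref{eq:hessian_general_cov_3}, I would exploit the symmetry $\bm{Y} \stackrel{d}{=} -\bm{Y}$ of the centered normal, together with the fact that the matrix coefficient $\bm{Y}\bm{Y}^{T} - (b^{2}/c^{2})\bm{I}$ is even in $\bm{Y}$. Averaging the two-evaluation expression with its image under $\bm{Y} \mapsto -\bm{Y}$ replaces $f(\bm{\theta} + c\bm{Y}) - f(\bm{\theta})$ by $\tfrac{1}{2}\bigl[f(\bm{\theta} + c\bm{Y}) + f(\bm{\theta} - c\bm{Y}) - 2 f(\bm{\theta})\bigr]$, which is exactly \eqref{eq:hessian_general_cov_3}.

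The only non-algebraic step is the interchange of $\bm{\nabla}^{2}$ with the expectation in Step~1, but this is the same regularity assumption already implicit in Proposition~\ref{proposition1}, so it may be invoked without additional work. Everything else is careful bookkeeping of the prefactors in $b$ and $c$, which is the main place where an error could easily creep in.
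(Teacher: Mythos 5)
Your proposal is correct and follows essentially the same route as the paper: specializing the second-order Stein identity \eqref{eq:second_order_normal} to \(\bm{\Sigma}=b^{2}\bm{I}\) and applying the change of variables \(\bm{X}=c\,\bm{Y}\). The paper states this in two lines and leaves implicit the two details you spell out --- the baseline subtraction of \(f(\bm{\theta})\) via \(\mathbb{E}[b^{-4}\bm{X}\bm{X}^{T}-b^{-2}\bm{I}]=\bm{0}\) and the \(\bm{Y}\mapsto-\bm{Y}\) symmetrization yielding the three-evaluation form --- so your write-up is a faithful, more complete version of the same argument.
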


\begin{proof}
The result follows from Stein’s second-order identity \eqref{eq:second_order_normal}, applied to the transformation \(\bm{X} = c\,\bm{Y}\). Given that \(\bm{X} \sim \mathcal{N}(\bm{0},\,b^{2}\,\bm{I})\), substituting \(\bm{X} = c\,\bm{Y}\) ensures that \(\bm{Y} \sim \mathcal{N}(\bm{0},\,\tfrac{b^{2}}{c^{2}}\,\bm{I})\).
\end{proof}

In the next section, we extend a similar approach used for Hessian estimation to the Quantum Fisher Information Matrix.
\subsection{QFIM Estimation via Stein’s Identity}
\label{subsec:QFIM-Stein}

Similar to the approach proposed by Gacon~\cite{Gacon}, which employs 2SPSA~\cite{Spall1992} to estimate the QFIM, we propose an alternative method based on Stein’s identity for QFIM estimation. We begin with a brief discussion of the QFIM and refer to~\cite{Meyer2021, Liu2020, S.Sidhu2020} for a general overview and its role in variational quantum algorithms.

Let \(\ket{\psi(\bm{\theta})}\) be a parameterized pure quantum state in an \(n\)-qubit Hilbert space, where \(\bm{\theta} \in \mathbb{R}^d\) represents a set of \(d\) trainable parameters. Fix a small displacement \(\delta\bm{\theta} \in \mathbb{R}^d\) and define the overlap function:
\begin{equation}
  \mathcal{F}(\bm{\theta})
  :=
  \bigl|\langle \psi(\bm{\theta}) \mid \psi(\bm{\theta} + \delta\bm{\theta})\rangle\bigr|^{2}.
  \label{eq:one-arg-fidelity}
\end{equation}
The Fubini–Study metric tensor is given by
\begin{equation}
  F_{ij}(\bm{\theta})
  =
  -\frac{1}{2}\,
  \partial_{i}\partial_{j}
  \,\mathcal{F}(\bm{\theta})\Big|_{\delta\bm{\theta}=0},
  \label{QFI-Hess-Like}
\end{equation}
so that the QFIM equals \(4\,F_{ij}(\bm{\theta})\). Explicitly,
\begin{equation}
  F_{ij}(\bm{\theta})
  =
  \mathrm{Re}\!\Bigl[
    \langle \partial_{i}\psi(\bm{\theta}) \mid \partial_{j}\psi(\bm{\theta})\rangle
    -
    \langle \partial_{i}\psi(\bm{\theta}) \mid \psi(\bm{\theta})\rangle\,
    \langle \psi(\bm{\theta}) \mid \partial_{j}\psi(\bm{\theta})\rangle
  \Bigr],
  \label{eq:Full-metric}
\end{equation}
where \(\partial_i\) and \(\partial_j\) denote partial derivatives with respect to \(\theta_i\) and \(\theta_j\). See Appendix~\ref{app:Fubini} for detailed computations. Evaluating \eqref{eq:Full-metric} using methods such as the parameter-shift rule requires \(O(d^2)\) computations, rendering it impractical for large-scale VQA optimization.

We now examine how this issue can be addressed using Stein’s identity. At first glance, we observe that \eqref{QFI-Hess-Like} resembles a Hessian, which allows us to apply the results from Section~\ref{sec:Stein} for its estimation. For practical purposes, we focus on the two-evaluation and three-evaluation methods, presented in \eqref{eq:hessian_general_cov_2} and \eqref{eq:hessian_general_cov_3}, respectively. Furthermore, assuming that the rows of \(\bm{Y}\) are independent and identically distributed (i.i.d.) random vectors drawn from \(\mathcal{N}\bigl(\bm{0},\,\tfrac{b^2}{c^2}\bm{I}\bigr)\). Higher accuracy can be achieved by generating \(N\) independent perturbation vectors \(\bm{Y}_{i}\).

\medskip
Let \(c > 0\) and \(\bm{Y} \sim \mathcal{N}\bigl(\bm{0},\,\tfrac{b^2}{c^2}\bm{I}\bigr)\). Based on \eqref{eq:one-arg-fidelity}, define the smoothed function
\begin{equation}
  \mathcal{F}_{c}(\bm{\theta})
  :=
  \mathbb{E}_{\bm{Y}}
  \bigl[\mathcal{F}(\bm{\theta} + c\,\bm{Y})\bigr].
\end{equation}
The two-evaluation Stein estimator for the metric tensor is
\begin{equation}
    \hat{\bm{F}} 
    =
    -\frac{c^2}{2\,b^4\,N}
    \sum_{i=1}^N
      \bigl[\mathcal{F}(\bm{\theta} + c\,\bm{Y}_i) - \mathcal{F}(\bm{\theta})\bigr]
      \bigl(\bm{Y}_i\,\bm{Y}_i^\top - \tfrac{b^2}{c^2}\,\bm{I}\bigr),
    \label{Stein_F2}
\end{equation}
and the three-evaluation estimator is
\begin{equation}
    \hat{\bm{F}} 
    =
    -\frac{c^2}{4\,b^4\,N}
    \sum_{i=1}^N
      \bigl[\mathcal{F}(\bm{\theta} + c\,\bm{Y}_i) + \mathcal{F}(\bm{\theta} - c\,\bm{Y}_i) - 2\,\mathcal{F}(\bm{\theta})\bigr]
      \bigl(\bm{Y}_i\,\bm{Y}_i^\top - \tfrac{b^2}{c^2}\,\bm{I}\bigr).
    \label{Stein_F3}
\end{equation}
Both estimators \eqref{Stein_F2} and \eqref{Stein_F3} are unbiased for \(\bm{\nabla}^{2}\mathcal{F}_{c}(\bm{\theta})\), i.e., \(\mathbb{E}[\hat{\bm{F}}] = \bm{\nabla}^{2}\mathcal{F}_{c}(\bm{\theta})\). We emphasize that by letting the smoothing parameter \(c \to 0\) while independently increasing the Monte Carlo sample size \(N \to \infty\), the bias of \eqref{Stein_F2}–\eqref{Stein_F3} with respect to the true metric tensor \(\bm{\nabla}^{2}\mathcal{F}(\bm{\theta})\) vanishes asymptotically as \(O(c^2)\). In practice, one chooses \(c\) and \(N\) so that this residual bias is negligible compared to statistical fluctuations or shot noise.

In equations \eqref{Stein_F2} and \eqref{Stein_F3}, the computational complexity is reduced from \( O(d^2) \) to a constant, making it independent of the number of parameters \( d \). Compared to the 2SPSA QFIM estimator proposed by Gacon~\cite{Gacon}, which requires four circuit evaluations, our method based on Stein’s identity offers greater flexibility. Users can choose between the two-circuit evaluation QFIM estimator, \eqref{Stein_F2}, or the three-circuit evaluation QFIM estimator, \eqref{Stein_F3}. Furthermore, the 2SPSA QFIM method requires two independent perturbation vectors, \(\bm{\Delta}_1\) and \(\bm{\Delta}_2\), whereas the Stein-based method requires only a single perturbation vector, \(\bm{Y}\). The approximation error in the QFIM estimates scales as \( O(N^{-1/2}) \).

The Fubini–Study metric, or QFIM, involves evaluating the squared overlap between quantum states \( |\psi(\bm{\theta})\rangle \) and \( |\psi(\bm{\theta}+ \delta\bm{\theta})\rangle \), i.e., \(\left| \langle \psi(\bm{\theta}) \mid \psi(\bm{\theta}+ \delta\bm{\theta}) \rangle \right|^2\). The estimation method employed in this work prepares the state \( U^\dagger(\bm{\theta}+ \delta\bm{\theta}) U(\bm{\theta}) |0\rangle \), where \( U(\bm{\theta}) \) is a parameterized unitary, and measures the probability of \( |0\rangle \), which directly yields the overlap. This approach maintains the original circuit width \( n \), where \( n \) is the number of qubits, but doubles the circuit depth to \( 2m \), with \( m \) representing the depth of \( U(\bm{\theta}) \).

In this manuscript, we utilize \eqref{Stein_F2} and \eqref{Stein_F3} as examples within the framework of the Quantum Natural Gradient (QNG) applied to the Variational Quantum Eigensolver (VQE). We leave the application to imaginary time evolution for future work. In the following section, we provide an overview of QNG.
\section{Quantum Natural Gradient and VQE}
\label{sec:QNG}

In the Variational Quantum Eigensolver (VQE), the objective function is typically given by
\begin{equation}
    \mathcal{L}(\bm{\theta}) = \bra{0} U^\dagger(\bm{\theta}) O U(\bm{\theta}) \ket{0},
\end{equation}
where \( O \) is a Hermitian operator, and the parameterized unitary is
\begin{equation}
    U(\bm{\theta}) = \prod_{\ell=1}^{p} W_{\ell} \exp(i\theta_{\ell} X_{\ell}), 
\end{equation}
with \( W_{\ell} \) and \( X_{\ell} \) being fixed unitary and Hermitian operators, respectively.

The parameter vector \( \bm{\theta} \) is iteratively updated to minimize the objective function \( \mathcal{L}(\bm{\theta}) \). In standard gradient descent, the update rule is given by
\begin{equation}
    \bm{\theta}_{k+1} = \bm{\theta}_k - \eta \bm{\nabla} \mathcal{L}(\bm{\theta}_k),
\end{equation}
where \( \eta > 0 \) is a user-defined learning rate (step size), and \( \bm{\nabla} \mathcal{L}(\bm{\theta}_k) \) denotes the Euclidean gradient.

In the Quantum Natural Gradient (QNG) method, the Euclidean gradient is replaced by the Riemannian gradient \( \bm{F}^{-1}(\bm{\theta}_k) \bm{\nabla} \mathcal{L}(\bm{\theta}_k) \), where \( \bm{F}^{-1}(\bm{\theta}_k) \) is the inverse of the Fubini–Study metric tensor. The update rule then becomes
\begin{equation}
    \bm{\theta}_{k+1} = \bm{\theta}_k - \eta \bm{F}^{-1}(\bm{\theta}_k) \bm{\nabla} \mathcal{L}(\bm{\theta}_k).
    \label{eq:QNG_updat}
\end{equation}
To incorporate previous stochastic estimates of the metric tensor, we replace \( \bm{F} \) in the update rule with \( \bar{\bm{F}}_k \):
\begin{equation}
    \bar{\bm{F}}_k = \frac{k}{k+1} \bar{\bm{F}}_{k-1} + \frac{1}{k+1} \hat{\bm{F}}_{k},
\end{equation}
where the metric estimate \( \bar{\bm{F}}_k \) incorporates all previous samples. This recursive averaging introduces a small geometric error, as it combines metric tensors evaluated at slightly different parameter points, \( \bm{\theta}_{k-1} \) and \( \bm{\theta}_k \). However, for sufficiently small step sizes, this error remains negligible in practice (see Remark~\ref{rem_tensor} in Appendix~\ref{Tensor_error}).

To ensure the metric remains positive semi-definite and invertible, we apply the regularization strategy described in \cite{Gacon}. In particular, the metric defined in equation~\eqref{Stein_F2} or~\eqref{Stein_F3} is replaced by
\begin{equation}
    \hat{\bm{F}}_{\mathrm{reg}} = \frac{\sqrt{\hat{\bm{F}}^{\!\top} \hat{\bm{F}}} + \beta\,\bm{I}}{1 + \beta},
    \label{reg}
\end{equation}
where \( \beta \) is a tunable regularization coefficient selected by empirical testing. If \( \beta \) is too small, the adjusted metric may fail to remain positive semi-definite; if \( \beta \) is too large, the contribution of the metric is effectively erased, reducing the quantum natural gradient update to a first-order update and canceling the curvature information carried by the metric.

Unlike the Moore–Penrose pseudo-inverse, the regularization in equation~\eqref{reg} avoids uncontrolled amplification of noise-dominated eigenvalues and prevents discontinuous jumps in the inverse metric when small eigenvalues cross zero, thereby preserving a smooth, positive-definite Riemannian metric that yields stable natural-gradient updates.

In addition to the regularization of the metric, a blocking mechanism may be enforced by setting \( \bm{\theta}_{k+1} = \bm{\theta}_k \) if the evaluation of the noisy objective function at \( \bm{\theta}_{k+1} \) is substantially higher than at \( \bm{\theta}_k \) by a user-specified constant. This ensures stability in the update rule and prevents divergence.

For the simulations in the next section, we use in the update rule \eqref{eq:QNG_updat} the metric estimators \eqref{Stein_F2} and \eqref{Stein_F3}, regularized according to \eqref{reg}. For the gradient, we employ the two-function evaluation estimator given in equation~\eqref{eq:two_evaluation_gradient}:
\begin{equation}
    \hat{\bm{g}}_k = \frac{1}{N} \sum_{i=1}^N (2c)^{-1} \big(\mathcal{L}(\bm{\theta}_k + c \bm{u}_{k,i}) - \mathcal{L}(\bm{\theta}_k - c \bm{u}_{k,i})\big) \bm{u}_{k,i},
    \label{Gradient_Stein}
\end{equation}
where \( \bm{u}_{k,i} \sim \mathcal{N}(\bm{0}, \bm{I}) \). This choice of metric tensor and gradient estimation offers good practical accuracy while significantly reducing the required number of resamplings \( N \) per iteration \( k \).

\section{Numerical Results}
\label{sec:numerical}
To demonstrate the practical applicability of our approach, we conducted numerical simulations of the VQE algorithm using the open-source software \texttt{PennyLane}~\cite{PennyLane}.

\subsection{Example 1: Transverse Field Ising Model}
\label{Example1}

The Transverse Field Ising Model (TFIM) with open boundary conditions is described by the Hamiltonian:

\begin{equation}
    H = J \sum_{i=1}^{N-1} \sigma_i^z \sigma_{i+1}^z + h \sum_{i=1}^{N} \sigma_i^x,
\end{equation}
where \( J \) is the coupling constant, \( h \) is the transverse field strength, and \( \sigma_i^z \), \( \sigma_i^x \) are Pauli matrices acting on site \( i \). The first term represents nearest-neighbor spin interactions along the \( z \)-axis, while the second term introduces quantum fluctuations via the transverse field along the \( x \)-axis.

In this example, we investigate the case with \( J = -1 \) and \( h = -2 \), and approximate the ground state of \( H \) using the hardware-efficient ansatz. This ansatz constructs the wavefunction with a layered quantum circuit that combines parameterized single-qubit rotations (\( R_Y \)) and entangling controlled-NOT (CNOT) gates (see Fig.~\ref{fig:Efficient_Ansatz}).

\begin{figure}[H]
    \centering
    \includegraphics[width=0.5\textwidth]{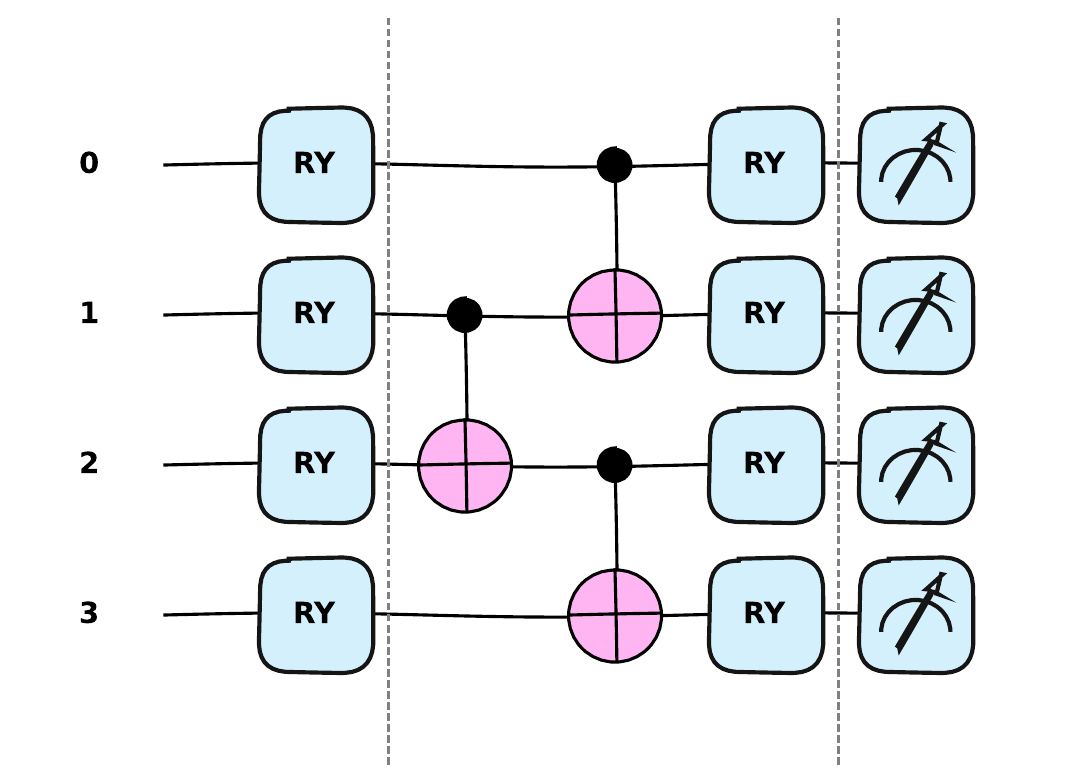}
   \caption{Hardware-efficient ansatz with two layers, using \( R_Y \) rotations and CNOT gates.}
    \label{fig:Efficient_Ansatz}
\end{figure}

In the VQE simulation, we benchmark the following optimizers: Gradient Descent (GD), Quantum Natural Gradient (QNG), Simultaneous Perturbation Stochastic Approximation (SPSA), Quantum Natural SPSA (QNSPSA), Stein Optimizer (using equation \eqref{Gradient_Stein} for the gradient without the Fubini–Study metric), Quantum Natural Stein Optimizer 2 (QNSTEIN2, using the gradient from equation \eqref{Gradient_Stein} and the metric from equation \eqref{Stein_F2}), and Quantum Natural Stein Optimizer 3 (QNSTEIN3, using the gradient from equation \eqref{Gradient_Stein} and the metric from equation \eqref{Stein_F3}). For all stochastic optimizers (SPSA, QNSPSA, STEIN, QNSTEIN2, and QNSTEIN3), resampling is performed at each optimization step \( k \), with \( N = 10 \) samples used for both the gradient and the metric (see \( N \) in \eqref{Gradient_Stein}, \eqref{Stein_F2}, and \eqref{Stein_F3}). The learning rate for all optimizers is fixed at \( 0.01 \). The regularization parameter \( \epsilon \) is set to \( 10^{-2} \) for QNSPSA, QNSTEIN2, and QNSTEIN3, and to \( 10^{-1} \) for QNG. The finite-difference step for SPSA and QNSPSA is fixed at \( 0.05 \), while in QNSTEIN2 and QNSTEIN3, parameters \( b = 2 \) and \( c = 0.05 \) are used. All simulations use 8192 shots, and each optimizer is run for up to 300 steps.

\begin{figure}[H]
    \centering
    \includegraphics[width=\textwidth]{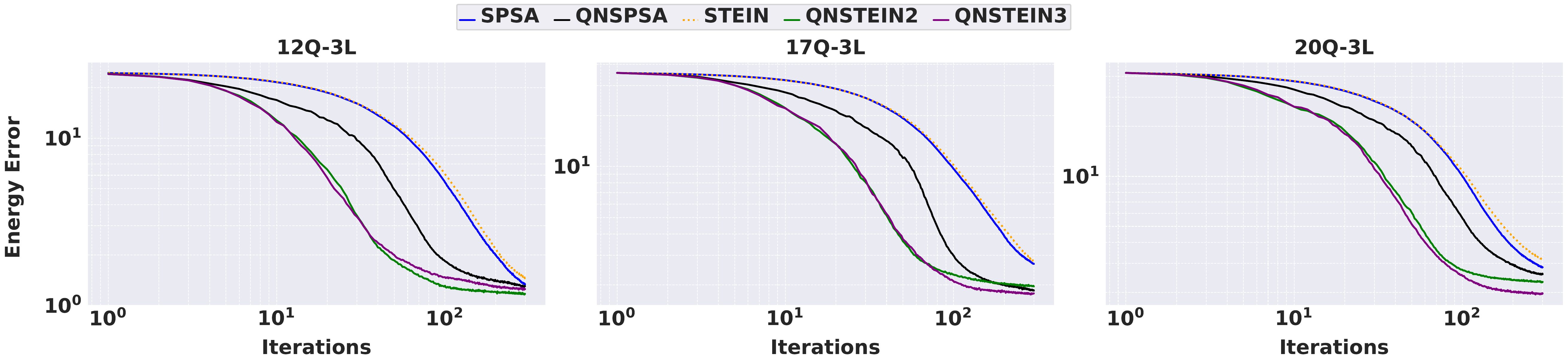}
   \caption{Energy error as a function of iteration steps for different optimization methods—SPSA, QNSPSA, STEIN, QNSTEIN2, and QNSTEIN3—applied to 3-layer circuits with varying numbers of qubits (12, 17, 20). The results are averaged over 30 different random initializations.}
    \label{fig:ising_energy_fixed3L}
\end{figure}
\begin{figure}[H]
    \centering
    \includegraphics[width=\textwidth]{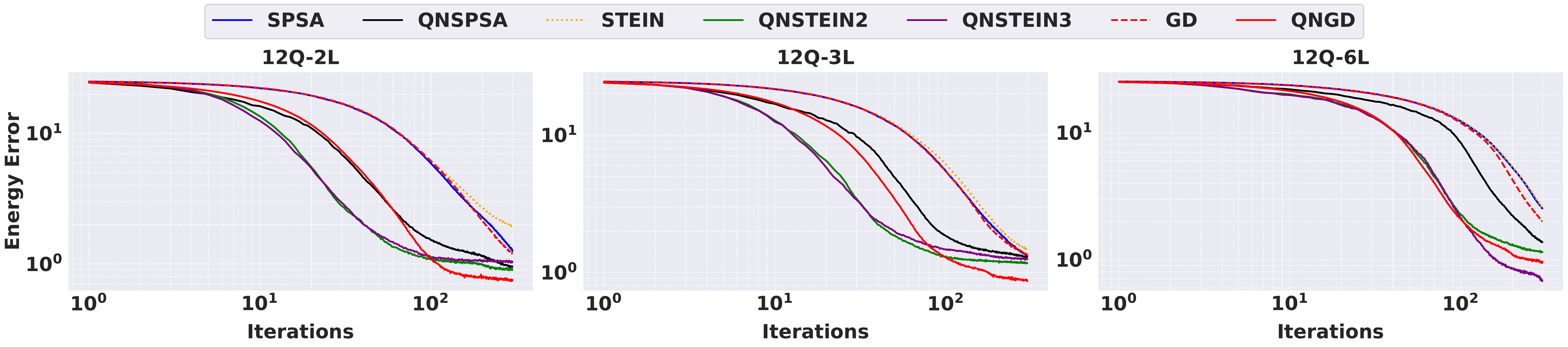}
    \caption{Same as in Figure~\ref{fig:ising_energy_fixed3L}, but this time with 12 qubits fixed and varying layers (\( L = 2, 3, 6 \)).}
    \label{fig:ising_energy_fixed12Q}
\end{figure}
\begin{figure}[H]
    \centering
    \includegraphics[width=\textwidth]{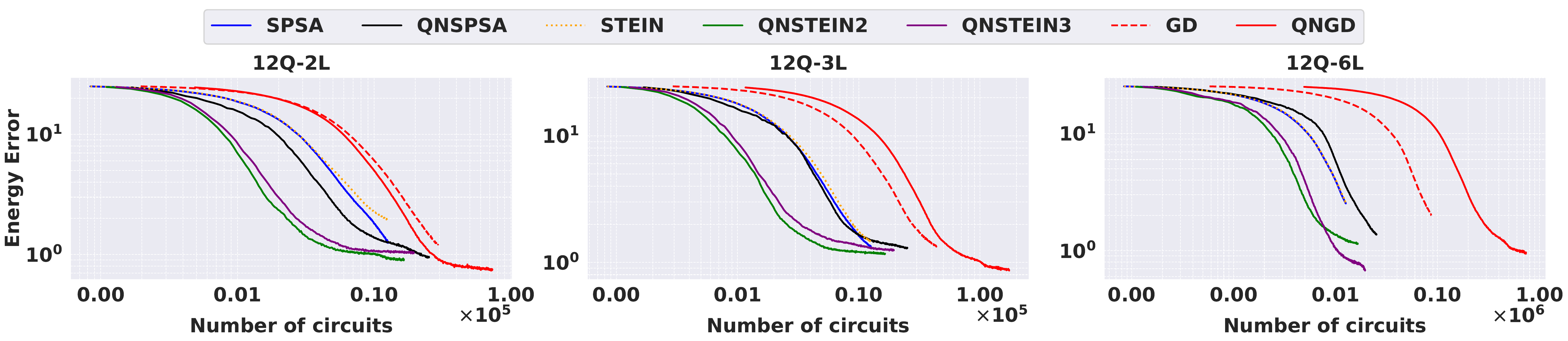}
    \caption{Same as in Figure~\ref{fig:ising_energy_fixed12Q}, but here we plot the energy error as a function of the number of quantum circuits required for convergence.}
    \label{fig:ising_energy_fixed12Q_Ciruit}
\end{figure}

The performance of the optimizers is highlighted in Figures~\ref{fig:ising_energy_fixed3L}–\ref{fig:ising_energy_fixed12Q_Ciruit}. Figure~\ref{fig:ising_energy_fixed3L} shows the energy error (i.e., the difference between the obtained and target energies) as a function of iteration for a fixed three-layer circuit with varying numbers of qubits, indicating that the QNSTEINs tend to exhibit an overall better convergence rate in these benchmarks. Figure~\ref{fig:ising_energy_fixed12Q} supports this observation for a fixed 12-qubit system with varying numbers of layers. Figure~\ref{fig:ising_energy_fixed12Q_Ciruit} illustrates the quantum resources required for convergence, showing that the QNSTEINs achieve target energies with the fewest circuit evaluations, thereby reducing quantum resource usage. Across all simulations, there is no significant difference between SPSA and STEIN in terms of convergence behavior and quantum resource consumption.

\subsection{Example 2: Schwinger Model}

The Schwinger Model \cite{Schwinger} is a (1+1)-dimensional quantum field theory that serves as a fundamental framework for studying quantum electrodynamics (QED) in reduced dimensions. Although it is lower-dimensional, the model retains essential features of more complex gauge theories, such as confinement, charge screening, and chiral symmetry breaking. These characteristics make it an important system for investigating non-perturbative effects in quantum field theory. Additionally, its lattice formulation facilitates efficient mapping onto quantum hardware, making it a promising candidate for exploring quantum simulations of gauge theories.

The dynamics of the Schwinger Model on a lattice are described by the Kogut-Susskind Hamiltonian. After mapping the Hamiltonian to qubits using the Jordan-Wigner transformation, the Hamiltonian for our quantum computing optimization task is:
\begin{equation}
H = \frac{x}{2} \sum_{n=0}^{N-2} \left( \sigma_n^x \sigma_{n+1}^x + \sigma_n^y \sigma_{n+1}^y \right) + \frac{\mu}{2} \sum_{n=0}^{N-1} \left[1 + (-1)^n \sigma^z_n\right] + \sum_{n=0}^{N-2} \left(l + \frac{1}{2} \sum_{k=0}^n (-1)^k \sigma^z_k\right)^2,
\label{eq:H_schwinger}
\end{equation}
where the operators \(\sigma_n^x, \sigma_n^y, \sigma_n^z\) represent the Pauli matrices applied to the qubit at site \(n\). The parameter \(x = 1/(g^2 a^2)\) is related to the coupling constant \(g\) and lattice spacing \(a\), while \(\mu = 2m/g^2 a\) is the dimensionless fermion mass term, with \(m\) being the fermion mass. The parameter \(l\) is a background electric field contribution, associated with the zero-mode of the gauge field.

To find the ground state of the Hamiltonian \eqref{eq:H_schwinger} using VQE, we employ the ansatz shown in Figure~\ref{fig:ansatz_schw} \cite{Yibin}, with parameters set to \( l = 0 \), \( x = 1 \), and \( \mu = 0.5 \). We benchmark the same optimizers and use the same hyperparameters as in Example 1; however, in this case, all optimizers run for up to 200 steps, with \( N = 15 \) samples and shot noise set to 10024.
 
\begin{figure}[H]
    \centering
    \begin{minipage}[t]{0.2\textwidth} 
        \centering
        \makebox[\textwidth]{(a)} \\ 
        \includegraphics[width=\textwidth]{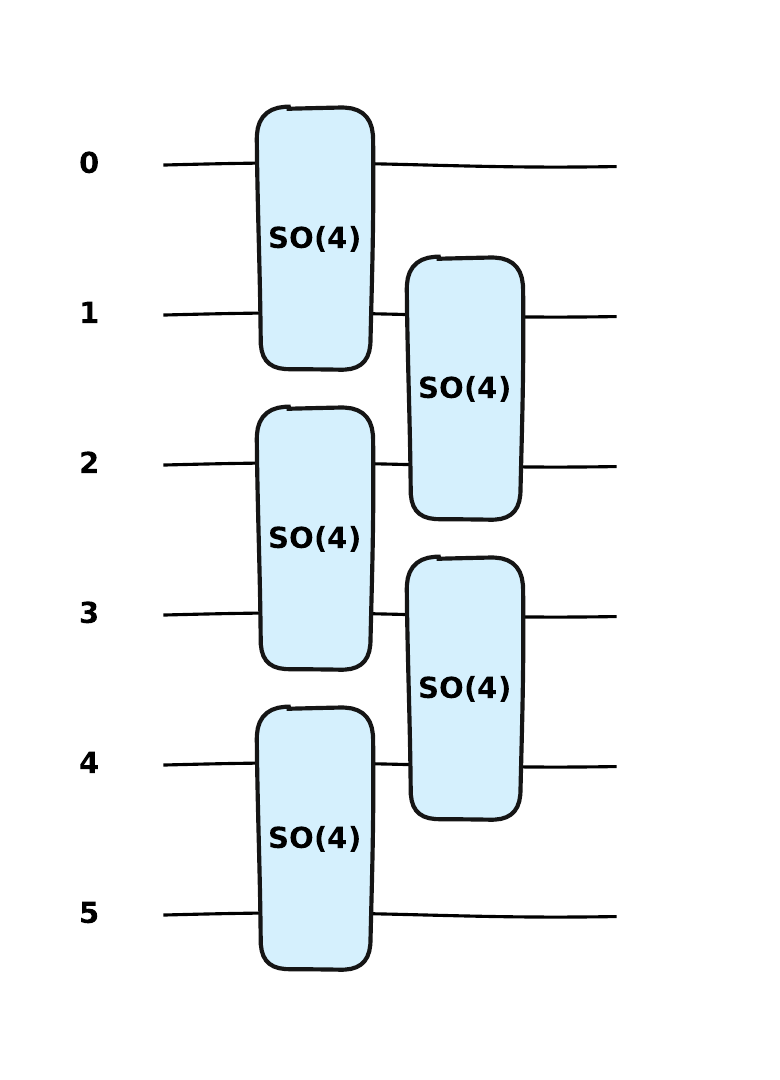} 
    \end{minipage}%
    \hfill
    \begin{minipage}[t]{0.65\textwidth} 
        \centering
        \makebox[\textwidth]{(b)} \\ 
        \[
  \begin{tabular}{@{}c@{\hspace{0.05cm}=\hspace{0.5cm}}c@{}}
    \raisebox{-0.5\height}{%
      \includegraphics[
        height=2cm,
        trim=0 10pt 0 10pt,
        clip
      ]{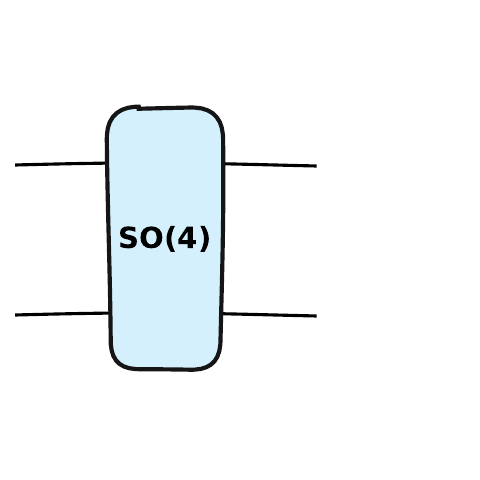}%
    }
    &
    \raisebox{-0.5\height}{%
      \includegraphics[
        height=2cm,
        trim=0 10pt 0 10pt,
        clip
      ]{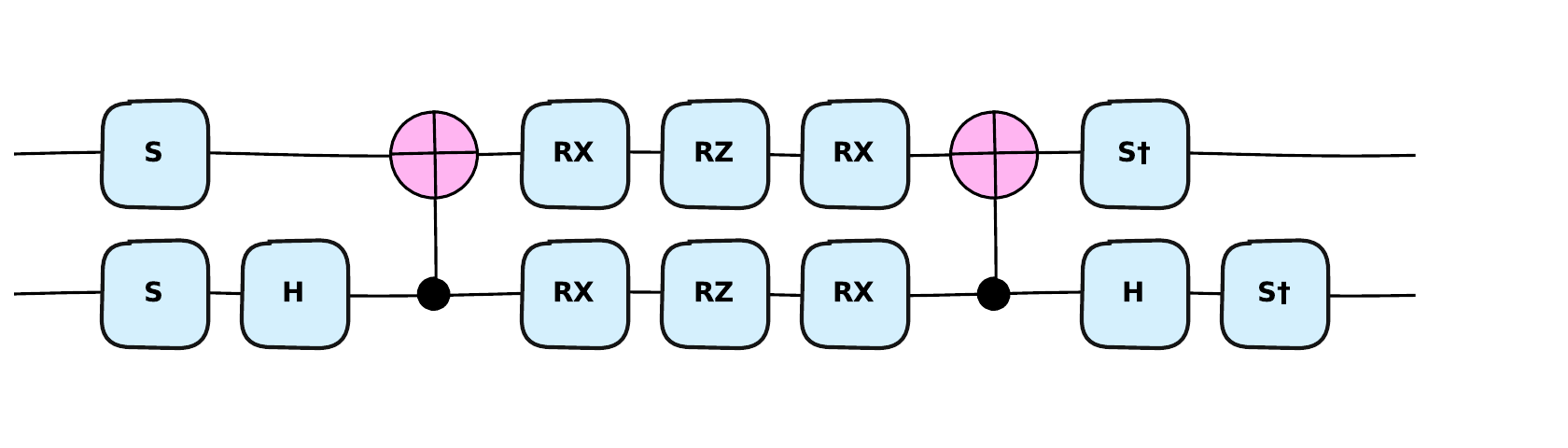}%
    }
  \end{tabular}
\]
    \end{minipage}
    \caption{(a) The ansatz used in the VQE algorithm to approximate the ground state of the Schwinger Hamiltonian, shown here with a single layer incorporating universal $SO(4)$ gates. (b) The decomposition of the two-qubit $SO(4)$ gate into single-qubit phase gates $S$ and its conjugate transpose $S^\dagger$, the Hadamard gate $H$, and rotation gates ($RX$, $RZ$), along with entangling two-qubit CNOT operations.}
    \label{fig:ansatz_schw}
\end{figure}
\begin{figure}[H]
    \centering
    \includegraphics[width=\textwidth]{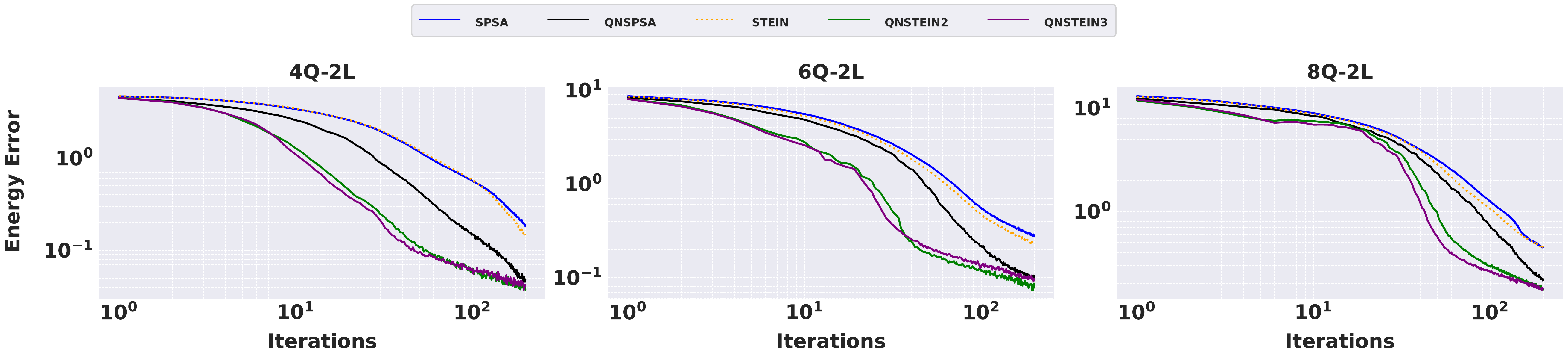}
    \caption{Energy error as a function of iteration steps for different optimization methods—SPSA, QNSPSA, STEIN, QNSTEIN2, and QNSTEIN3—applied to 2-layer circuits with varying numbers of qubits (4, 6, 8). The Results are averaged over 30 random initializations of the variational parameters.}
    \label{fig:schw_energy_fixed2L}
\end{figure}
\begin{figure}[H]
    \centering
    \includegraphics[width=\textwidth]{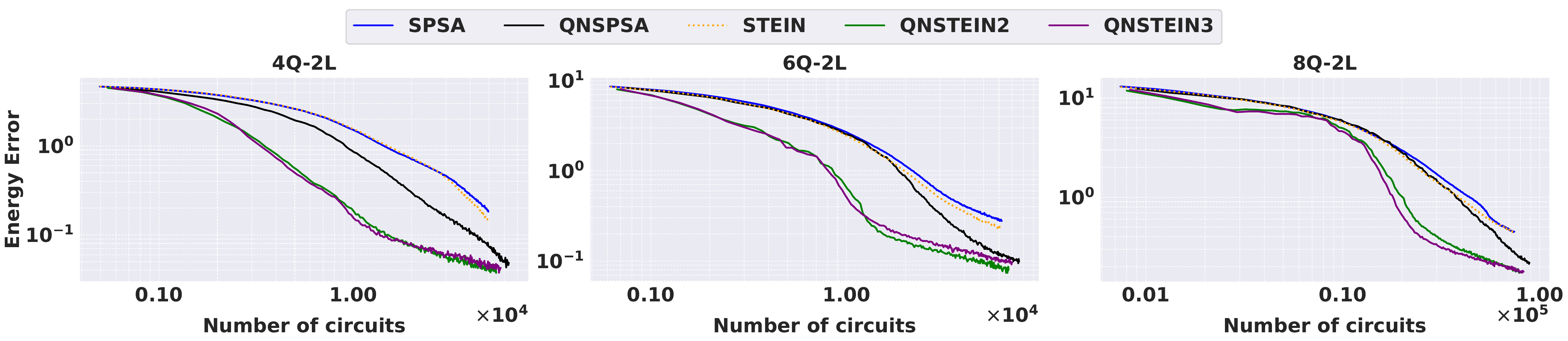}
    \caption{Same as in Figure~\ref{fig:schw_energy_fixed2L}, but here we plot the energy error as a function of the number of quantum circuits required for convergence.}
    \label{fig:schw_energy_fix2L_circuits}
\end{figure}

As in the TFIM Hamiltonian, the results in Figures~\ref{fig:schw_energy_fixed2L} and~\ref{fig:schw_energy_fix2L_circuits} show that the QNSTEIN optimizers also perform well in terms of convergence and in reducing the quantum resources required for the more complex, higher-energy physics Hamiltonian of the Schwinger model.
\section{Conclusions and Outlook}
\label{sec:conclusion}

In this work, we proposed an approach for estimating the Quantum Fisher Information Matrix (QFIM) within Variational Quantum Algorithms (VQAs) using Stein’s identity. The method significantly reduces computational complexity from \( O(m^2) \) to a constant, effectively preserving parameter correlations and enhancing performance in quantum optimization tasks. Compared to the existing QFIM estimation method based on Simultaneous Perturbation Stochastic Approximation (SPSA) proposed in \cite{Gacon}, which also reduces the computational complexity to a constant, our Stein-based estimator requires fewer quantum circuit evaluations (function overlaps), thereby reducing the overall quantum resource requirements.

We validated our approach by benchmarking various optimizers in VQE simulations applied to the Transverse Field Ising Model and the lattice Schwinger Model. The results suggest that Quantum Natural Stein algorithms tend to achieve lower energy values with fewer iterations and reduced quantum resource consumption in these benchmarks, highlighting the potential benefits of integrating our Stein-based QFIM estimator into practical variational quantum algorithms.

Quantitatively, our estimators notably reduce the number of circuit evaluations per optimization step compared to Quantum Natural SPSA (QNSPSA). Specifically, QNSPSA typically requires two evaluations for gradients and four for the Fubini–Study metric, totaling six evaluations per step. In contrast, the two-evaluation Stein estimator (QNSTEIN2) uses two circuits for gradient and two for metric estimation, totaling four evaluations per step—a 1.5× improvement. The three-evaluation Stein estimator (QNSTEIN3) employs two circuits for gradient estimation and three for metric estimation, totaling five circuits per step—a 1.2× improvement. Considering metric evaluations alone, QNSTEIN2 achieves a 2× reduction, and QNSTEIN3 achieves approximately a 1.33× reduction. Although practical factors such as finite-shot noise, estimator variance, and parameter settings (\(c\) and \(b\)) may slightly affect these theoretical gains, appropriate tuning or modest increases in sampling generally allow the theoretical advantages to translate into practical benefits.

The presented framework is flexible and opens avenues for further developments, including exploring alternative perturbation distributions and adaptive resampling strategies to enhance accuracy and robustness. Moreover, this QFIM estimation technique may naturally extend to other quantum computational domains, such as quantum metrology.

\section*{Acknowledgements}

This work was supported by the Ministry of Science, Research, and Culture of the State of Brandenburg through the Centre for Quantum Technologies and Applications (CQTA) at DESY (Germany) and by the German Ministry of Education and Research (BMBF) under project NiQ (Grant No. 13N16203). The author thanks Yibin Guo (DESY, Germany) for valuable discussions on the Schwinger model.

\begin{figure}[H]
\centering
\includegraphics[width=0.2\textwidth]{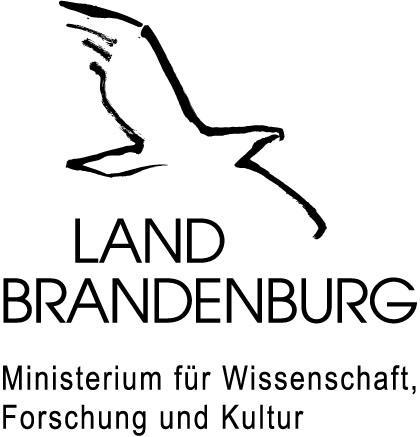}
\end{figure}

\newpage

\appendix
\section{Fubini-Study Metric} \label{app:Fubini}

To derive the Fubini-Study metric tensor \(F_{ij}(\bm{\theta})\), we investigate the infinitesimal distance between adjacent quantum states in the parameter space. This metric captures the local differential geometry of the quantum state manifold, reflecting how the quantum state \(|\psi(\bm{\theta})\rangle\) responds to infinitesimal parameter variations~\cite{Stokes2020, Meyer2021, Liu2020, S.Sidhu2020}.

We begin with the normalization condition of the quantum state:
\begin{equation}
\langle \psi(\bm{\theta}) | \psi(\bm{\theta}) \rangle = 1.
\end{equation}
Differentiating this condition with respect to \(\theta^i\) yields:
\begin{equation}
\left\langle \psi(\bm{\theta})| \partial_i \psi(\bm{\theta}) \right\rangle 
+ \left\langle \partial_i \psi(\bm{\theta})| \psi(\bm{\theta}) \right\rangle = 0.
\label{eq:33}
\end{equation}
Taking a further derivative with respect to \(\theta^j\) leads to:
\begin{equation}
\left\langle \psi(\bm{\theta})| \partial_i \partial_j \psi(\bm{\theta}) \right\rangle 
+ \left\langle \partial_i \partial_j \psi(\bm{\theta})| \psi(\bm{\theta}) \right\rangle 
+ \left\langle \partial_i \psi(\bm{\theta})| \partial_j \psi(\bm{\theta}) \right\rangle 
+ \left\langle \partial_j \psi(\bm{\theta})| \partial_i \psi(\bm{\theta}) \right\rangle = 0.
\label{eq:34}
\end{equation}
We now consider the perturbed quantum state \(|\psi(\bm{\theta} + \delta \bm{\theta})\rangle\), where \(\delta \bm{\theta}\) is a small displacement vector. Expanding this state in a Taylor series yields:
\begin{equation}
|\psi(\bm{\theta} + \delta\theta)\rangle = |\psi(\bm{\theta})\rangle + \partial_i |\psi(\bm{\theta})\rangle \delta\theta^i + \frac{1}{2} \partial_i \partial_j |\psi(\bm{\theta})\rangle \delta\theta^i \delta\theta^j + \mathcal{O}(\delta\theta^3).
\end{equation}
Taking the inner product with the original state results in:
\begin{equation}
\langle \psi(\bm{\theta}) | \psi(\bm{\theta} + \delta \bm{\theta}) \rangle = 1 + \langle \psi(\bm{\theta}) | \partial_i \psi(\bm{\theta}) \rangle \delta\theta^i + \frac{1}{2} \langle \psi(\bm{\theta}) | \partial_i \partial_j \psi(\bm{\theta}) \rangle \delta\theta^i \delta\theta^j + \mathcal{O}(\delta\theta^3).
\end{equation}
The fidelity between these quantum states, accurate to second order in \(\delta \bm{\theta}\), is given by:
\begin{align}
|\langle \psi(\bm{\theta})| \psi(\bm{\theta} + \delta \bm{\theta}) \rangle|^2 
&= 1 + \left[ \left\langle \psi(\bm{\theta})| \partial_i \psi(\bm{\theta}) \right\rangle + \left\langle \partial_i \psi(\bm{\theta})| \psi(\bm{\theta}) \right\rangle \right] \delta\theta^i \nonumber \\
&\quad + \left[ \left\langle \partial_i \psi(\bm{\theta})|  \psi(\bm{\theta}) \right\rangle  \left\langle \psi(\bm{\theta}) | \partial_j \psi(\bm{\theta}) \right\rangle \right] \delta\theta^i \delta\theta^j \nonumber \\
&\quad + \frac{1}{2} \left[ \left\langle \psi(\bm{\theta}) | \partial_i \partial_j \psi(\bm{\theta}) \right\rangle + \left\langle \partial_i \partial_j \psi(\bm{\theta}) | \psi(\bm{\theta}) \right\rangle \right] \delta\theta^i \delta\theta^j + \cdots
\end{align}
By incorporating Eqs.~\eqref{eq:33} and \eqref{eq:34}, this simplifies to:
\begin{equation}
\begin{aligned}
\bigl|\langle \psi(\bm{\theta}) \mid \psi(\bm{\theta} + \delta\bm{\theta}) \rangle\bigr|^2
&= 1
  + \Bigl[
      \langle \partial_i \psi(\bm{\theta}) \mid \psi(\bm{\theta}) \rangle\,
      \langle \psi(\bm{\theta}) \mid \partial_j \psi(\bm{\theta}) \rangle \\[6pt]
&\quad
    - \tfrac12 \,\bigl(
        \langle \partial_i \psi(\bm{\theta}) \mid \partial_j \psi(\bm{\theta}) \rangle
        +\,
        \langle \partial_j \psi(\bm{\theta}) \mid \partial_i \psi(\bm{\theta}) \rangle
      \bigr)
  \Bigr] \,\delta\theta^i\,\delta\theta^j
  + \cdots
\end{aligned}
\end{equation}
Using the small-angle approximation for the Fubini-Study distance between projective rays:
\begin{equation}
d^2(P_\psi, P_\phi) = \arccos^2(|\langle \psi | \phi \rangle|) \approx 1 - |\langle \psi | \phi \rangle|^2 + \mathcal{O}((1 - |\langle \psi | \phi \rangle|^2)^2),
\end{equation}
we obtain the following expression for the infinitesimal squared distance:
\begin{equation}\label{eq:qgt_expanded}
\begin{aligned}
d^2\bigl(P_{\psi(\bm{\theta})},\,P_{\psi(\bm{\theta}+\delta\bm{\theta})}\bigr)
&= \Bigl[
     \tfrac12\,\bigl(
       \langle \partial_i \psi(\bm{\theta}) \mid \partial_j \psi(\bm{\theta})\rangle
       + \langle \partial_j \psi(\bm{\theta}) \mid \partial_i \psi(\bm{\theta})\rangle
     \bigr) \\
&\quad\; -\,\langle \partial_i \psi(\bm{\theta}) \mid \psi(\bm{\theta})\rangle\,
           \langle \psi(\bm{\theta}) \mid \partial_j \psi(\bm{\theta})\rangle
  \Bigr]\,d\theta^i\,d\theta^j \,.
\end{aligned}
\end{equation}
The first term is evidently real, and the second is also real due to the normalization constraint, which implies:
\begin{equation}
\text{Re}\left[ \left\langle \psi(\bm{\theta})| \partial_i \psi(\bm{\theta}) \right\rangle \right] = 0.
\end{equation}
Thus, the infinitesimal squared distance corresponds to the real part of the quantum geometric tensor:
\begin{align}
d^2(P_{\psi(\bm{\theta})}, P_{\psi(\bm{\theta} + \delta \bm{\theta})}) 
&= \text{Re} \left[
\left\langle \partial_i \psi(\bm{\theta})| \partial_j \psi(\bm{\theta}) \right\rangle 
- \left\langle \partial_i \psi(\bm{\theta})| \psi(\bm{\theta}) \right\rangle 
\left\langle \psi(\bm{\theta})| \partial_j \psi(\bm{\theta}) \right\rangle 
\right] d \theta^i d \theta^j. \label{eq:fs_distance}
\end{align}
This leads directly to the definition of the Fubini-Study metric tensor:
\begin{equation}
F_{ij}(\bm{\theta}) = \text{Re} \left[
\left\langle \partial_i \psi(\bm{\theta}) | \partial_j \psi(\bm{\theta}) \right\rangle 
- \left\langle \partial_i \psi(\bm{\theta}) | \psi(\bm{\theta}) \right\rangle 
\left\langle \psi(\bm{\theta}) | \partial_j \psi(\bm{\theta}) \right\rangle 
\right]. \label{eq:fs_metric}
\end{equation}
One of the most costly traditional methods for estimating the Fubini–Study metric, with a computational complexity of \( O(d^2) \), is the parameter-shift rule \cite{Mari}
\begin{equation}
\begin{aligned}
F_{j_1, j_2}(\bm{\theta}) = \frac{1}{4} \Big[ &|\langle \psi(\bm{\theta}) | \psi(\bm{\theta} + (\bm{e}_{j_1} + \bm{e}_{j_2})\pi/2) \rangle|^2 \\
&- |\langle \psi(\bm{\theta}) | \psi(\bm{\theta} + (\bm{e}_{j_1} - \bm{e}_{j_2})\pi/2) \rangle|^2 \\
&- |\langle \psi(\bm{\theta}) | \psi(\bm{\theta} + (-\bm{e}_{j_1} + \bm{e}_{j_2})\pi/2) \rangle|^2 \\
&+ |\langle \psi(\bm{\theta}) | \psi(\bm{\theta} - (\bm{e}_{j_1} + \bm{e}_{j_2})\pi/2) \rangle|^2 \Big].
\end{aligned}
\end{equation}
Here, \( \bm{e}_{j} \) denotes the unit vector along the \( \theta_{j} \) axis.
\section{Tensor‐averaging error}
\label{Tensor_error}

\begin{remark}[Tensor‐averaging error]
Strictly speaking, the update of \(\bar{\mathbf{F}}_k\) in equation~\eqref{eq:QNG_updat} mixes two Riemannian metric tensors that lie in different tangent spaces at \(\boldsymbol{\theta}_{k-1}\) and \(\boldsymbol{\theta}_k\). A fully geometric treatment would first \emph{pull back}
\(\mathbf{F}(\boldsymbol{\theta}_{k-1})\)
along the map \(\boldsymbol{\theta}_{k-1} \to \boldsymbol{\theta}_k\) into the tangent space at \(\boldsymbol{\theta}_k\), and only then average the two metrics in the same space.

However, for a small learning rate \(\eta\), it can be shown that
\begin{equation}
  \bigl\|\boldsymbol{\theta}_k - \boldsymbol{\theta}_{k-1}\bigr\|
  = \bigl\|\eta\,\mathbf{F}(\boldsymbol{\theta}_{k-1})^{-1}\,\nabla\mathcal{L}(\boldsymbol{\theta}_{k-1})\bigr\|
  = O(\eta),
\end{equation}
and if \(\mathbf{F}(\boldsymbol{\theta})\) is Lipschitz continuous with constant \(L\), then
\begin{equation}
  \bigl\|\mathbf{F}(\boldsymbol{\theta}_k) - \mathbf{F}(\boldsymbol{\theta}_{k-1})\bigr\|
  \le L\,\bigl\|\boldsymbol{\theta}_k - \boldsymbol{\theta}_{k-1}\bigr\|
  = O(\eta).
\end{equation}
Since the pullback operator itself deviates from the identity by \(O(\eta)\) over such a small step, the combined geometric bias—from neglecting the pullback and averaging two nearby metrics—is \(O(\eta)\) per step. For typical VQE learning rates (\(\eta \ll 1\)), this remains negligible compared to stochastic and shot-noise fluctuations.
\label{rem_tensor}
\end{remark}

\section{Additional Experiments and Figures}

In Figure~\ref{fig:ising_N}, we fix the resampling size \( N = 5 \) for QNSTEIN2 and QNSTEIN3, while varying the resampling size \( N \) for QNSPSA. We observe that QNSPSA requires a larger resampling size (greater than 5) to achieve a convergence rate comparable to the QNSTEIN optimizers. However, this increase in \( N \) significantly raises the quantum resources needed for convergence.

\begin{figure}[H]
    \centering
    \includegraphics[width=\textwidth]{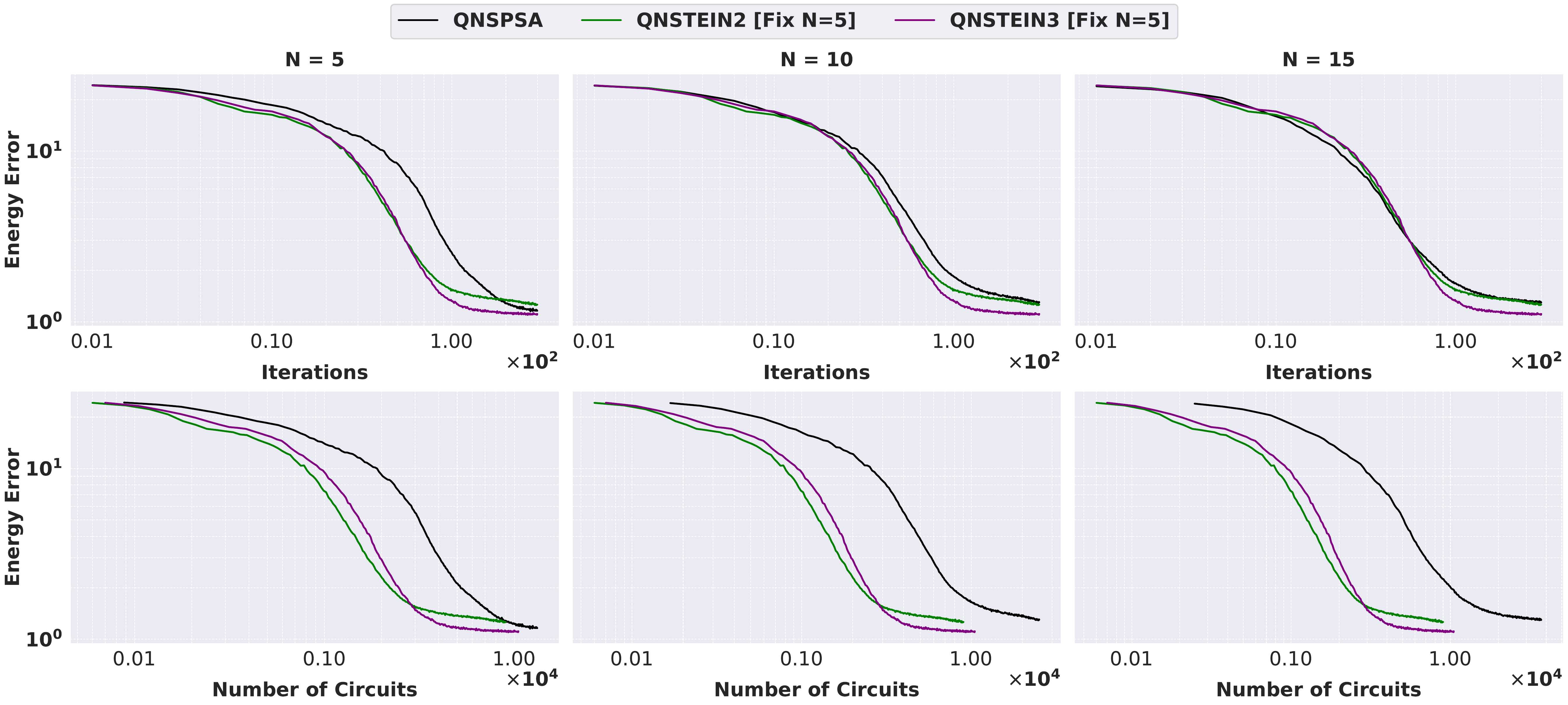}
    \caption{The resampling size \( N = 5 \) is fixed for QNSTEIN2 and QNSTEIN3, while \( N \) is varied for QNSPSA. The experimental conditions are the same as in Section~\ref{Example1} (12 qubits and 3 layers), and results are averaged over 30 different random initializations.}
    \label{fig:ising_N}
\end{figure}

\end{document}